\date{March 20, 2014}
\newtheorem{theorem}{Theorem}[section]
\newtheorem{proposition}[theorem]{Proposition}
\theoremstyle{definition}
\theoremstyle{remark}
\newtheorem{remark}[theorem]{Remark}
\numberwithin{equation}{section}
\newcommand{\rd}{{\,\rm d}}
\newcommand{\e}{{\rm e}}
\newcommand{\Ran}{\mathop{\rm Ran}}
\newcommand{\R}{{\mathbb R}}
\newcommand{\C}{{\mathbb C}}
\newcommand{\dom}{\mathcal{D}}
\newcommand\re{\mathrm{Re}}
\newcommand\I{\mathrm{i}}
\newcommand{\set}[2]{\{#1 : #2 \}}
\renewcommand\H{\mathcal{H}}
\DeclareMathOperator{\supp}{supp}
\DeclareMathOperator{\tr}{tr}
\begin{document}

\title[Dipoles in Graphene]{Dipoles in Graphene Have Infinitely Many Bound States}

\author{Jean-Claude Cuenin}
\author{Heinz Siedentop}

\address{Mathematisches Institut\\
 Ludwig-Maximilians-Universit\"at M\"unchen\\
 Theresienstra\ss e 39\\
 80333 Munich\\ Germany}
 
\email{cuenin@math.lmu.de}

\email{h.s@lmu.de}

\begin{abstract}
  We show that in graphene charge distributions with non-vanishing
  dipole moment have infinitely many bound states. The corresponding
  eigenvalues accumulate at the edges of the gap faster than any
  power.
\end{abstract}

\maketitle

\section{Introduction}

Graphene close to the Fermi surface is often described by 
two-dimensional massless Dirac operators. Strained graphene, though develops a
mass gap (Vozmediano et al~\cite{Vozmedianoetal2010}). These materials
together with an electric dipole recently attracted attention by De
Martino et al \cite{Martinoetal2014}. They predicted that the corresponding
Hamiltonian would have infinitely many bound states inside the
spectral gap regardless of the strength of the dipole moment. These
bound states should be supported at long distances and small momenta
where the non-relativistic behavior of the operator -- due to the mass
gap -- is dominant. It is thus plausible that their result agrees with 
the prediction of Connolly and Griffiths \cite{ConnollyGriffiths2007} for the
two-dimensional Schr\"odinger operator.
In contrast, for the three-dimensional Schr\"odinger operator
there is a critical dipole moment below which no bound states exist,
see Abramov and Komarov \cite{AbramovKomarov1972}.  

The argument of De Martino et al is based on replacing the electric
potential by the pure dipole part whose singularity is cut off at
small distances. This approximation is -- physically -- justified,
since -- as pointed out above -- almost all the bound states are
supported at large distances where the dipole approximation is
good. Based on this approximation the problem is explicitly solvable
in terms of Mathieu functions and McDonald functions.

In this paper we will show that the result can indeed be proven and --
in fact -- be generalized -- up to technical constraints -- to
arbitrary charge distributions of total vanishing charge.  Indeed, the
non-vanishing of either the total charge or the dipole moment is
necessary and sufficient for the existence of infinitely many bound
states.

De Martino et al also predicted exponential clustering of those
eigenvalues $E_n$ as they approach edges of the gap $(-m,m)$. We show
-- in the same vein -- that all the moments of the distance to the
nearest gap edge, i.e., $\sum_n (m-|E_n|)^\delta$, converge for all
positive $\delta$.

\pagebreak

We will use the following notation. Let  $x_0\in\R^2\setminus\{0\}$. The two-dimensional Dirac operator
$D$ is initially given on the dense domain
$\dom_0:=C_{0}^{\infty}(\R^2\setminus\{-x_0,x_0\})$ as
\begin{equation}\label{Dirac operator (symmetric)}
\begin{split}
D&=D_0+\gamma V,\\
D_0&=-\I\sigma\cdot\nabla+m\sigma_3\\
V(x)&=|x-x_0|^{-1}-|x+x_0|^{-1},
\end{split}
\end{equation}
where $\sigma=(\sigma_1,\sigma_2)$ and $\sigma_1,\sigma_2,\sigma_3$ are the standard Pauli matrices. We may assume without loss of generality that the coupling constant $\gamma$ (which plays the role of the dipole moment in the present case) is positive; otherwise, we could just replace $x_0$ by $-x_0$.
Note $D$ is symmetric but not essentially self-adjoint. We will find a distinguished self-adjoint
extension with the property that the kinetic energy remains finite. The punctured
plane $\R^2\setminus\{-x_0,x_0\}$ is chosen here because of the Coulomb singularities of the potential could be replaced by $\R^2$ for regular $V$.

We write $\mathcal{B}(\mathcal{H},\mathcal{K})$ for the bounded
operators from a Hilbert space $\H$ to a Hilbert space
$\mathcal{K}$. If $\mathcal{K}=\H$, we just write $\mathcal{B}(\mathcal{H})$. The identity in $\mathcal{K}$ is denoted by$I_{\mathcal{K}}$. In the following, we shall set $\H=L^2(\R^2,\C^2)$ and denote its scalar product (linear in the second argument) and norm by $(\cdot,\cdot)$ and $\|\cdot\|$, respectively. Moreover, we use 
$\mathfrak{S}_p$ for the Schatten ideal of order $p$ in $\H$ and
$R_0(z)=(D_0-z)^{-1}$ for the free resolvent.

\section{Self-adjoint extension}
Since the potential has Coulomb singularities, the extension of the
above symmetric operator is not entirely straightforward. In
particular, in the absence of a Hardy inequality in two dimensions it
is -- contrary to the three dimensional case -- not even possible for small
coupling constant to define the operator as an operator sum by means
of the perturbation theory of Kato and Rellich. Instead, we
resort to a resolvent type equation, in the spirit of Kato
\cite{Kato1966W} and Nenciu \cite{Nenciu1976}.

\begin{theorem}[Existence of a distinguished self-adjoint extension]
  \label{Theorem self-adjoint extension} 
  Assume that $\gamma<1/2$. Then there exists a unique
  self-adjoint extension $D_{\rm ex}$ of $D$ with the property
  $\dom(D_{\rm ex})\subset H^{1/2}(\R^2,\C^2)$.
\end{theorem}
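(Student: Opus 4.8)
The plan is to build $D_{\rm ex}$ through its resolvent, following the Kato--Nenciu scheme for Dirac operators with Coulomb-type singularities. Fix $z=\I\mu$ with $\mu\in\R\setminus\{0\}$, so that $R_0(z)=(D_0-z)^{-1}$ is bounded and, by the spectral theorem, $\| |D_0|^{1/2}R_0(\I\mu)|D_0|^{1/2}\| = \sup_{\lambda\in\sigma(D_0)}|\lambda|\,|\lambda-\I\mu|^{-1}\le1$. Factor the potential in polar form, $\gamma V=\Phi\,U\,\Phi$, where $\Phi:=\gamma^{1/2}|V|^{1/2}\ge0$ and $U:=\mathrm{sgn}\,V$ is unitary and self-adjoint as a multiplication operator; since $|V(x)|\le|x-x_0|^{-1}+|x+x_0|^{-1}=:w(x)^2$, one may also write $\Phi=\gamma^{1/2}\theta\,w$ with $\|\theta\|_\infty\le1$, which reduces every bound on $\Phi R_0(z)\Phi$ to one on $wR_0(z)w$.

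Two analytic inputs drive the construction. First, \emph{boundedness}: $\Phi|D_0|^{-1/2}$ extends to a bounded operator on $\H$. As $|D_0|^{-1/2}=(-\Delta+m^2)^{-1/4}$ and $w^2$ is a sum of two translates of $|x|^{-1}$, this follows from the two-dimensional Kato-type inequality $\| |x|^{-1/2}(-\Delta)^{-1/4}\|_{\mathcal B(L^2(\R^2))}<\infty$ (which holds even though the Hardy inequality for $-\Delta$ fails in two dimensions); hence $T(z):=\overline{\Phi R_0(z)\Phi}=(\Phi|D_0|^{-1/2})(|D_0|^{1/2}R_0(z)|D_0|^{1/2})(|D_0|^{-1/2}\Phi)$ is bounded. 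Second, \emph{smallness}: $\|T(\I\mu)\|<1$ for $|\mu|$ large, provided $\gamma<1/2$. Since $U$ is isometric, $\|T(\I\mu)\|=\gamma\,\| |V|^{1/2}R_0(\I\mu)|V|^{1/2}\|\le\gamma\,\|wR_0(\I\mu)w\|$. As $|\mu|\to\infty$ the convolution kernel of $R_0(\I\mu)$ concentrates on the scale $(m^2+\mu^2)^{-1/2}$ and decays exponentially, so that only the behaviour of $w$ near each of its two singularities survives in the limit, the bulk part of $w$ and the cross-terms between the singularities at $x_0$ and $-x_0$ (a fixed distance $2|x_0|>0$ apart) being $o(1)$; near each singularity, after a translation and the unitary dilation $x\mapsto(m^2+\mu^2)^{1/2}x$, one finds
\[
\lim_{|\mu|\to\infty}\|wR_0(\I\mu)w\|=\| |x|^{-1/2}(-\I\sigma\cdot\nabla-\I)^{-1}|x|^{-1/2}\|.
\]
I expect evaluating this last constant to be \emph{the main obstacle}: the claim is that it equals $2$, which I would obtain by splitting the massless two-dimensional Dirac operator into its angular-momentum channels labelled by $j\in\mathbb{Z}+\tfrac12$, the channel $j$ contributing $1/|j|$ and the worst cases $|j|=\tfrac12$ yielding the value $2$; alternatively one may estimate $wR_0(\I\mu)w$ directly by a Schur test against its explicit Bessel-function kernel. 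Granting this, $\gamma<1/2$ makes $\|T(\I\mu)U\|<1$ for all sufficiently large $|\mu|$.

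Now fix such a $\mu$. Since $I+T(\I\mu)U$ is invertible, put
\[
R(z):=R_0(z)-\overline{R_0(z)\Phi}\;U\,(I+T(z)U)^{-1}\,\overline{\Phi R_0(z)}
\]
for $z$ in a neighbourhood of $\I\mu$. A direct computation with the resolvent identity for $R_0$ shows that $R(z)$ is a pseudo-resolvent, $R(z)-R(z')=(z-z')R(z)R(z')$; the relations $T(z)^*=T(\bar z)$ and $U(I+T(\bar z)U)=(I+UT(\bar z))U$ give $R(z)^*=R(\bar z)$; and $R(z)$ is injective, since $R(z)\phi=0$ forces, with $\zeta:=U(I+T(z)U)^{-1}\overline{\Phi R_0(z)}\phi$, the identities $\phi=\Phi\zeta$ and $\overline{\Phi R_0(z)}\phi=T(z)\zeta$, which upon substitution yield $\zeta=0$ and hence $\phi=0$. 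Therefore $R(z)$ is the resolvent of a unique self-adjoint operator $D_{\rm ex}$; that $D_{\rm ex}$ extends $D$ follows by checking $R(\I\mu)(D-\I\mu)\psi=\psi$ for $\psi\in\dom_0$, all manipulations there being legitimate because $\Phi\psi\in L^2$. Moreover $\dom(D_{\rm ex})=\Ran R(\I\mu)\subset H^{1/2}(\R^2,\C^2)$: in the formula for $R(z)$ the first term maps $\H$ into $H^1$, while $\overline{R_0(z)\Phi}$ maps $\H$ into $\dom(|D_0|^{1/2})=H^{1/2}$ because $|D_0|^{1/2}R_0(z)\Phi=(|D_0|^{1/2}R_0(z)|D_0|^{1/2})(|D_0|^{-1/2}\Phi)$ is bounded.

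Finally, uniqueness. Let $\widetilde D$ be any self-adjoint extension of $D$ with $\dom(\widetilde D)\subset H^{1/2}$, take $\psi\in\dom(\widetilde D)$ and set $\phi:=(\widetilde D-\I\mu)\psi$; then $\Phi\psi\in L^2$, since $\Phi|D_0|^{-1/2}$ is bounded and $\psi\in H^{1/2}$. Testing $\phi=(\widetilde D-\I\mu)\psi$ against $\chi\in\dom_0$, using $\widetilde D\chi=D\chi$ together with the fact that $\dom_0$ is a core for $D_0$ (points have zero $H^1$-capacity in two dimensions), one obtains $R_0(\I\mu)\phi=\psi+\overline{R_0(\I\mu)\Phi}\,U\Phi\psi$; applying $\Phi$ gives $(I+T(\I\mu)U)\Phi\psi=\overline{\Phi R_0(\I\mu)}\phi$, so that $\Phi\psi$, and then $\psi=R_0(\I\mu)\phi-\overline{R_0(\I\mu)\Phi}\,U\Phi\psi=R(\I\mu)\phi$, is uniquely determined; thus $\widetilde D=D_{\rm ex}$.
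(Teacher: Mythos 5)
Your overall strategy is the same as the paper's: a Kato--Nenciu resolvent construction in which $\gamma V$ is factored as $\Phi U\Phi$, the sandwiched resolvent $\Phi R_0(\I\mu)\Phi$ is shown to have norm $<1$ for $\gamma<1/2$ and $|\mu|$ large, and $D_{\rm ex}$ is recovered from the resulting pseudo-resolvent; your verification of the pseudo-resolvent identities, of the inclusion $\Ran R(\I\mu)\subset H^{1/2}(\R^2,\C^2)$, and your uniqueness argument all run parallel to Steps 2--4 of the paper. The genuine gap is that the single estimate on which the whole construction (and the hypothesis $\gamma<1/2$) rests --- the value $2$ for the one-centre constant $\| |x|^{-1/2}(D_0-\I\eta)^{-1}|x|^{-1/2}\|$ --- is asserted rather than proven: you yourself call its evaluation ``the main obstacle'' and offer only a sketch via angular-momentum channels (and your claimed per-channel value $1/|j|$ is not obviously what the exact computation gives, although only the supremum over channels matters and that is indeed $2$). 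The paper closes exactly this point in Step 1: writing $D_0$ in polar coordinates one finds the identity $\|r^{1/2}(D_0-\I\eta)r^{1/2}\psi\|^2=(m^2+\eta^2)\|r\psi\|^2+\|r^{1/2}\partial_r r^{1/2}\psi\|^2+\|\partial_\theta\psi\|^2$, and the substitution $\chi=r\psi$ together with an integration by parts (the sharp one-dimensional Hardy inequality) gives $\|r^{1/2}\partial_r r^{1/2}\psi\|^2\ge\tfrac14\|\psi\|^2$, i.e.\ inequality \eqref{Sandwich inequality 2d} and hence the bound $2$ for every $\eta$, not only in a large-$|\eta|$ limit. Until an argument of this kind is supplied, your proof is not complete.

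A second, smaller gap concerns the passage from one centre to two. You work with $w^2=|x-x_0|^{-1}+|x+x_0|^{-1}$ globally and claim $\lim_{|\mu|\to\infty}\|wR_0(\I\mu)w\|=2$; but if one simply localizes $w$ near each singularity and adds the two diagonal contributions by the triangle inequality one obtains $4$, not $2$. The paper avoids this by cutting the potential off so that the two singular pieces have \emph{disjoint supports} (whence $|\widetilde V|^{1/2}=|\widetilde V_1|^{1/2}+|\widetilde V_2|^{1/2}$, equation \eqref{support properties chi}) and by estimating $A_1^2+A_2^2\le 4\bigl(\|\chi(|x-x_0|)\psi\|^2+\|\chi(|x+x_0|)\psi\|^2\bigr)\le 4\|\psi\|^2$, which exploits the orthogonality of the localized pieces of $\psi$; the off-diagonal terms are then killed by the exponential decay of the free resolvent kernel, and the bounded remainder $V-\widetilde V$ is absorbed afterwards by Kato--Rellich. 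You need the analogous orthogonality argument to keep the two-centre constant at $2+\varepsilon$ rather than $4$; the rest of your construction can then proceed as written.
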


\begin{proof}
  \underline{Step 1:} We claim that for any $a\in\R^2$, $\eta\in\R$
  and $\psi\in C_0^{\infty}(\R^2,\C^2)$,
  \begin{align}\label{Sandwich inequality 2d}
    \||x-a|^{1/2}(D_0-\I\eta)|x-a|^{1/2}\psi\|^2\geq
    \frac{1}{4}\|\psi\|^2
  \end{align}
  By translation invariance of $D_0$ it is sufficient to prove
  \eqref{Sandwich inequality 2d} for $a=0$. We write $D_0$ in polar
  coordinates $(r,\theta)$,
  \begin{align*}
    D_0=\begin{pmatrix}
      m&\e^{-\I\theta}\left(-\I\partial_r-\frac{1}{r}\partial_{\theta}\right)\\
      \e^{\I\theta}\left(-\I\partial_r+\frac{1}{r}\partial_{\theta}\right)&-m
    \end{pmatrix}.
  \end{align*}
  Then, for $\psi\in C_0^{\infty}(\R^2)$,
  \begin{align*} 
  \|r^{1/2}(D_0-\I\eta)r^{1/2}\psi\|^2&=(m^2+\eta^2)\|r\psi\|^2+\|r^{1/2}\partial_r r^{1/2}\psi\|^2+\|\partial_{\theta}\psi\|^2\\
  &\geq \|r^{1/2}\partial_r r^{1/2}\psi\|^2.
  \end{align*}
  Setting $\chi=r\psi$, and integrating by parts, we obtain
   \begin{align*}
   \|r^{1/2}\partial_r r^{1/2}\psi\|^2=\int_0^{2\pi}\int_{0}^{\infty}|\partial_r\chi|^2 r\rd r\rd \theta +\frac{1}{4}\int_0^{2\pi}\int_{0}^{\infty}\frac{|\chi|^2}{r^2}r\rd r\rd \theta\geq\frac{1}{4}\|\psi\|^2,
   \end{align*}
which proves \eqref{Sandwich inequality 2d}. Incidentally, the constant $1/4$ in \eqref{Sandwich inequality 2d} is sharp. This fact becomes apparent in the invariant subspace decomposition of $D_0$ with respect to the total angular momentum $J=-\I\partial_{\theta}+\frac{1}{2}\sigma_3$, and is related to the sharp one-dimensional Hardy inequality.

\underline{Step 2:} We first consider the case of one Coulomb
singularity. We introduce the scale of spaces
\begin{align*}
\mathcal{H}^{+}\subset\mathcal{H}\subset\mathcal{H}^{-},\quad \mathcal{H}^{\pm}:=
H^{\pm 1/2}(\R^2,\C^2),
\end{align*}
where the embeddings are dense and continuous. As is customary, we shall denote the duality pairing 
in $\H^+\times\H^{-}$ by $(\cdot,\cdot)$ as well.
Obviously, 
\begin{align}\label{D0 and inverse on scale space}
D_0\in\mathcal{B}(\mathcal{H}^+,\mathcal{H}^-),\quad R_0(\I\eta)\in\mathcal{B}(\mathcal{H}^-,\mathcal{H}^+).
\end{align}
Following the method of Kato \cite{Kato1983} we show that   
\begin{equation}\label{definition extension}
\begin{split}
  \dom(D_a)&:=\set{\psi\in \mathcal{H}^{+}}{(D_0+\gamma |x-a|^{-1})\psi\in \mathcal{H}},\\
  D_{a}\psi&:=(D_0+\gamma |x-a|^{-1})\psi,
  \end{split}
\end{equation}
is a self-adjoint operator. By the basic criterion for self-adjointness
\cite[Thm. VIII.3]{ReedSimon1972}, it is sufficient to show that
$D_{a}$ is symmetric and that $\Ran(D_{a}\pm \I)=\mathcal{H}$. Since
$\dom_0\subset\dom(D_{a})$, the operator $D_a$ is densely defined. To
prove that $D_a$ is symmetric, it remains to show that
\begin{align}\label{symmetry}
(D_a\phi,\psi)=(\phi,D_a\psi),\quad \phi,\psi\in\dom(D_a).
\end{align}
For later use, we recall the following generalized Hardy inequality \cite{{Herbst1977}}. Let $0<\alpha<n$. Then on $H^{\alpha/2}(\R^n)$, 
\begin{align}\label{Herbst inequality}
|\sqrt{-\Delta}|^{\alpha}-2^a\left[\frac{\Gamma\left(\frac{n+a}{4}\right)}{\Gamma\left(\frac{n-a}{4}\right)}\right]^2|x|^{-\alpha}> 0,
\end{align}
and the inequality continues to hold (with the same sharp constant) if $\sqrt{-\Delta}$ is replaced by $\sqrt{-\Delta+m^2}$ and/or $|x|$ is replaced by $|x-a|$ (by translation invariance).
In particular, \eqref{Herbst inequality} (with $n=2$, $\alpha=1/2$) implies that
\begin{align}\label{bounded operators 1}
|x-a|^{-1/2}\in\mathcal{B}(\mathcal{H}^+,\mathcal{H})\cap \mathcal{B}(\mathcal{H},\mathcal{H}^-),
\end{align}
which, together with \eqref{D0 and inverse on scale space}, implies that
\begin{align}\label{bounded operators 2}
|x-a|^{-1}\in\mathcal{B}(\mathcal{H^+},\mathcal{H}^-),\quad D_0+\gamma|x-a|^{-1}\in\mathcal{B}(\mathcal{H^+},\mathcal{H}^-).
\end{align}
Let $\phi,\psi\in\dom(D_a)\subset\mathcal{H}^+$. By \cite[Thm. 7.14]{LiebLoss1996}, there exist $(\psi_n)_n\subset C^{\infty}_0(\R^2)$ such that $\psi_n\to \psi$ in~$\mathcal{H}^{+}$. 
By the definition of the weak derivative and \eqref{bounded operators 2}, 
\begin{align*}
(D_a\phi,\psi)&=((D_0+\gamma|x-a|^{-1})\phi,\psi)=
\lim_{n\to\infty}\left((D_0\phi,\psi_n)+(\gamma|x-a|^{-1}\phi,\psi_n)\right)\\
&=\lim_{n\to\infty}\left((\phi,D_0\psi_n)+(\phi,\gamma|x-a|^{-1}\psi_n)\right)
=\lim_{n\to\infty}(\phi,(D_0+\gamma|x-a|^{-1})\psi_n)\\
&=(\phi,(D_0+\gamma|x-a|^{-1})\psi)=(\phi,D_a\psi).
\end{align*}
This proves \eqref{symmetry}.

To show that $\Ran(D_a\pm \I)=\H$, observe that by \eqref{D0 and inverse on scale space}, \eqref{bounded operators 1}
\begin{align}\label{Q}
Q(\I\eta):=|x-a|^{-1/2}R_0(\I\eta)|x-a|^{-1/2}\in\mathcal{B}(\mathcal{H}).
\end{align}
Moreover, \eqref{Sandwich inequality 2d} implies $\|Q(\I\eta)\|_{\mathcal{B}(\mathcal{H})}\leq 2.$
By the Neumann series, for $\gamma<1/2$, the operator
\begin{align}\label{resolvent formula}
R(\I\eta):=R_0(\I\eta)-\gamma R_0(\I\eta)|x-a|^{-1/2}\left(I+\gamma Q(\I\eta)\right)^{-1}|x-a|^{-1/2}R_0(\I\eta)
\end{align}
is in $\mathcal{B}(\mathcal{H}^-,\mathcal{H}^+)$.
A straightforward computation shows that 
\begin{equation}\label{left right inverses}\begin{split}
R(\I\eta)(D_0+\gamma|x-a|^{-1}-\I\eta)&=I_{\mathcal{H}^+},\\
(D_0+\gamma|x-a|^{-1}-\I\eta) R(\I\eta))&=I_{\mathcal{H}^-},
\end{split}
\end{equation}
compare \cite{Kato1983}. Let $\psi\in \mathcal{H}\subset\mathcal{H}^-$. Then $\phi=R_0(\I\eta)\psi\in \mathcal{H}^+$, and by the second identity in \eqref{left right inverses}, 
$(D_0+\gamma |x-a|^{-1})\phi=\psi\in\mathcal{H}$,
so that $\phi\in\dom(D_a)$, and $D_a\phi=\psi$. This completes the proof of $\Ran(D_a\pm \I)=\H$.

\underline{Step 3:} Following Nenciu \cite{Nenciu1977}, we extend the
above proof to the two-center potential $V=V_1+V_2$, where
\begin{align*}
V_1(x)=\frac{1}{|x-x_0|},\quad V_2(x)=\frac{1}{|x+x_0|}.
\end{align*}
 Let $\chi\in C_0^{\infty}(\R_+)$ be a nonnegative function such that $\chi(r)=1$ for $r\leq |x_0|/4$ and $\chi(r)=0$ for $r\geq |x_0|/2$, and let 
\begin{align*}
\widetilde{V}_1(x):=\chi^2(|x-x_0|)V_1(x),\quad \widetilde{V}_2(x):=\chi^2(|x+x_0|)V_2(x).
\end{align*}
We split $V$ into a singular and a regular part,
$V=\widetilde{V}+(V-\widetilde{V})$, where
$\widetilde{V}:=\widetilde{V}_1+\widetilde{V}_2$. Note that the
analogues of \eqref{bounded operators 1}--\eqref{bounded operators 2}
hold for $\widetilde{V}$, $\widetilde{V}_i$, $i=1,2$, while
$V-\widetilde{V}\in\mathcal{B}(\mathcal{H})$. We will use
\eqref{Sandwich inequality 2d} to show that for any $\varepsilon>0$
there exists $\eta_0>0$ such that
\begin{align}\label{Sandwich inequality multicenter}
\||\widetilde{V}|^{1/2}R_0(\I\eta)|\widetilde{V}|^{1/2}\|_{\mathcal{B}(\mathcal{H})}\leq \left(2+\varepsilon \right),\quad |\eta|>\eta_0.
\end{align}
Repeating the arguments of the last step, one then sees that the
operator $\widetilde{D}$, defined as in \eqref{definition extension},
but with $|x-a|^{-1}$ replaced by $\widetilde{V}$, is a self-adjoint
operator for $\gamma<1/2$. Self-adjointness of $D_{\rm
  ex}:=\widetilde{D}+\gamma(V-\widetilde{V})$ then follows from the
Kato-Rellich theorem \cite{ReedSimon1972}.  Indeed, upon substituting
$|x-a|^{-1/2}$ in \eqref{Q}--\eqref{resolvent formula} by $|V|^{1/2}$
and $V^{1/2}$ in the first, respectively in the second occurrence, one
checks that $R(\I\eta)\in\mathcal{B}(\mathcal{H}^-,\mathcal{H}^+)$ is
the inverse of
$D_0+\gamma\widetilde{V}-\I\eta\in\mathcal{B}(\mathcal{H}^+,\mathcal{H}^-)$. Here,
$V^{1/2}:=|V|^{1/2}U$ where $U$ is the partial isometry in the polar
decomposition of $V$.  Note that, by the support properties of $\chi$,
we have
\begin{align}\label{support properties chi}
|\widetilde{V}|^{1/2}=|\widetilde{V}_1|^{1/2}+|\widetilde{V}_2|^{1/2},
\end{align}
so that by the triangle inequality, we have for $\psi\in\mathcal{H}$,
\begin{align*}
\||\widetilde{V}|^{1/2}R_0(\I\eta)|\widetilde{V}|^{1/2}\psi\|^2\leq A_1^2+A_2^2+2B(A_1+A_2)+B^2,
\end{align*}
with 
\begin{align*}
A_i:=\||\widetilde{V}_i|^{1/2}R_0(\I\eta)|\widetilde{V}_i|^{1/2}\psi\|^2,\quad B:=\sum_{i\neq j}\||\widetilde{V}_i|^{1/2}R_0(\I\eta)|\widetilde{V}_j|^{1/2}\psi\|^2\quad i,j=1,2.
\end{align*}
By \eqref{Sandwich inequality 2d}, 
\begin{align*}
A_1^2&=\|\chi(|x-x_0|)|V_1|^{1/2}R_0(\I\eta)\chi(|x-x_0|)|V_1|^{1/2}\psi\|^2\\
&\leq \||V_1|^{1/2}R_0(\I\eta)\chi(|x-x_0|)|V_1|^{1/2}\psi\|^2\\
&\leq 4\|\chi(|x-x_0|)\psi\|^2,
\end{align*}
and similarly for $A_2^2$. 
Therefore,
\begin{align*}
A_1^2+A_2^2\leq 4\left(\|\chi(|x-x_0|)\psi\|^2+\|\chi(|x+x_0|)\psi\|^2\right)\leq 
4\|\psi\|^2.
\end{align*}
To finish the proof of \eqref{Sandwich inequality multicenter}, we claim that
\begin{align}\label{ineqij to zero}
\lim_{|\eta|\to\infty}\frac{\||\widetilde{V}_i|^{1/2}R_0(\I\eta)|\widetilde{V}_j|^{1/2}\psi\|^2}{\|\psi\|^2}=0,\quad i\neq j.
\end{align}
This follows from the following estimate for the free resolvent kernel. For $k,l=1,2$, $|x-y|\geq |x_0|$ and $|\eta|\geq \eta_0$,
\begin{align}\label{estimate for the free resolvent kernel}
|R_0(\I\eta)_{kl}(x-y)|\leq C(x_0,\eta_0)\e^{-\frac{1}{4}\sqrt{m^2+\eta^2}|x-y|}.
\end{align}
Indeed, assuming \eqref{estimate for the free resolvent kernel} for the moment, it follows that the Hilbert-Schmidt norm of $|\widetilde{V}_i|^{1/2}R_0(\I\eta)|\widetilde{V}_j|^{1/2}$ is bounded by
\begin{align*}
 4C(x_0,\eta_0)\e^{-\frac{1}{4}\sqrt{m^2+\eta^2}|x_0|}\|\widetilde{V}_i\|_{L^1(\R^2)}\|\widetilde{V}_j\|_{L^1(\R^2)},
\end{align*}
and this converges to zero as $|\eta|\to\infty$. Since the operator
norm is bounded by the Hilbert-Schmidt norm, \eqref{ineqij to zero}
follows. It remains to prove \eqref{estimate for the free resolvent
  kernel}. Noticing that
\begin{align*}
R_0(\I\eta)=(D_0+\I\eta)(-\Delta+k^2)^{-1}, \quad \kappa^2:=m^2+\eta^2,
\end{align*}
and using 
the explicit formula for the heat kernel of $-\Delta$, we arrive at
\begin{align*}
|R_0(\I\eta)_{kl}(x-y)|&=\left|\frac{1}{4\pi}\int_0^{\infty}\left(D_0+\I\eta\right)_{kl}\e^{-\kappa^2 t}\e^{-\frac{|x-y|^2}{4t}}\frac{\rd t}{t}\right|\\
&\leq \frac{1}{4\pi}\int_0^{\infty}\left(\frac{|x-y|}{2t}+\kappa\right)\e^{-\kappa^2 t}\e^{-\frac{|x-y|^2}{4t}}\frac{\rd t}{t}\\
&\leq C(x_0,\eta_0)\e^{-\frac{1}{4}\kappa|x-y|},
\end{align*}
for $|x-y|\geq |x_0|$, $|\eta|\geq \eta_0$ and $k,l=1,2$; the constant can be taken e.g.\ as
\[
C(x_0,\eta_0):=\frac{1}{4\pi}\left(\frac{4}{|x_0|}+\frac{16}{|x_0|^2\sqrt{m^2+\eta_0^2}}\right).
\]

\underline{Step 4:} To prove the uniqueness statement of the Theorem,
suppose that there is another self-adjoint extension $H\supset D$ such
that $\dom(H)\subset \mathcal{H}^+$. Let
$\phi\in\dom(H)\subset\mathcal{H}^+$ and
$\psi\in\dom(D)=\dom_0$. Regarding $D_0+\gamma V$ as an operator in
$\mathcal{B}(\mathcal{H^+},\mathcal{H^-})$ again and repeating the
integration by parts argument in the proof of \eqref{symmetry}, we
obtain
\begin{align*}
(H\phi,\psi)=(\phi,H\psi)=(\phi,D\psi)=(\phi,(D_0+\gamma V)\psi)=((D_0+\gamma V)\phi,\psi).
\end{align*}
Since $\mathcal{D}_0$ is dense in $\mathcal{H}$, this implies $(D_0+\gamma V)\phi=H\phi\in \mathcal{H}$. Hence, $\phi\in\dom(D_{\rm ex})$, and $H\phi=D_{\rm ex}\phi$. This proves that $H\subset D_{\rm ex}$. The reverse inclusion is proved similarly. 
\end{proof}

\begin{remark}
  The proof can easily be extended to cover the case of $N$ Coulomb
  singularities, see \cite{Nenciu1977} for the three-dimensional case.
\end{remark}

\begin{proposition}\label{prop. essential spectrum}
The essential spectrum of $D_{\rm ex}$ is 
\begin{align*}
\sigma_{\rm ess}(D_{\rm ex})=\sigma_{\rm ess}(D_0)=(-\infty,-m]\cup [m,\infty).
\end{align*}
\end{proposition}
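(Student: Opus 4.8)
The plan is to derive the statement from Weyl's theorem on the invariance of the essential spectrum under perturbations that are compact at the level of resolvents. I would fix $\eta\in\R$ with $|\eta|>\eta_0$ as in \eqref{Sandwich inequality multicenter}, so that $\I\eta$ lies in the resolvent set of both $D_0$ and $D_{\rm ex}$ and the Neumann series of the previous proof is available, and then show that
\begin{equation*}
(D_{\rm ex}-\I\eta)^{-1}-R_0(\I\eta)
\end{equation*}
is a compact operator on $\H$. Granting this, Weyl's theorem gives $\sigma_{\rm ess}(D_{\rm ex})=\sigma_{\rm ess}(D_0)$, and the identification $\sigma_{\rm ess}(D_0)=(-\infty,-m]\cup[m,\infty)$ is immediate from the Fourier representation of $D_0$: the symbol $\sigma\cdot p+m\sigma_3$ is Hermitian with $(\sigma\cdot p+m\sigma_3)^2=(|p|^2+m^2)I$, hence has eigenvalues $\pm\sqrt{|p|^2+m^2}$, so that $\sigma(D_0)=(-\infty,-m]\cup[m,\infty)$, and this set contains no isolated points and no eigenvalues.

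For the compactness of the resolvent difference I would use the splitting $D_{\rm ex}=\widetilde D+\gamma(V-\widetilde V)$ from Step~3 of the previous proof. First, the resolvent formula \eqref{resolvent formula} applied to $\widetilde D$ gives
\begin{equation*}
(\widetilde D-\I\eta)^{-1}-R_0(\I\eta)=-\gamma\,R_0(\I\eta)|\widetilde V|^{1/2}\,(I+\gamma Q(\I\eta))^{-1}\,\widetilde V^{1/2}R_0(\I\eta),
\end{equation*}
with $(I+\gamma Q(\I\eta))^{-1}\in\mathcal B(\H)$ by the bound on $Q$. Now $\widetilde V^{1/2}$ and $|\widetilde V|^{1/2}$ are multiplications by compactly supported functions lying in $L^p(\R^2)$ for every $p<4$, while the entries of the Fourier multiplier of $R_0(\I\eta)=(D_0+\I\eta)(-\Delta+\kappa^2)^{-1}$ are $O((1+|p|)^{-1})$ and hence lie in $L^p(\R^2)$ for every $p>2$. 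Choosing $p\in(2,4)$, the Kato--Seiler--Simon inequality yields $\widetilde V^{1/2}R_0(\I\eta)\in\mathfrak S_p$ and, by the same estimate (or by taking adjoints), $R_0(\I\eta)|\widetilde V|^{1/2}\in\mathfrak S_p$. Hence the displayed difference lies in $\mathfrak S_{p/2}$, in particular is compact, and $\sigma_{\rm ess}(\widetilde D)=\sigma_{\rm ess}(D_0)$.

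It then remains to pass from $\widetilde D$ to $D_{\rm ex}$. By the support properties of $\chi$, the multiplication operator $\gamma(V-\widetilde V)$ is bounded (the Coulomb singularities of $V$ at $\pm x_0$ are exactly removed by $\widetilde V$) and vanishes at infinity. Writing
\begin{equation*}
\gamma(V-\widetilde V)(\widetilde D-\I\eta)^{-1}=\gamma(V-\widetilde V)R_0(\I\eta)+\gamma(V-\widetilde V)\bigl[(\widetilde D-\I\eta)^{-1}-R_0(\I\eta)\bigr],
\end{equation*}
the second summand is compact by the previous step, and the first is compact because a bounded function vanishing at infinity times a Fourier multiplier vanishing at infinity is compact on $L^2(\R^2)$ — approximate $V-\widetilde V$ in $L^\infty$ by its restrictions to large balls and apply Kato--Seiler--Simon to each. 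Thus $\gamma(V-\widetilde V)$ is $\widetilde D$-compact, and Weyl's theorem gives $\sigma_{\rm ess}(D_{\rm ex})=\sigma_{\rm ess}(\widetilde D)=(-\infty,-m]\cup[m,\infty)$.

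The only delicate point I anticipate is the compactness of $\widetilde V^{1/2}R_0(\I\eta)$: in two dimensions this operator is not Hilbert--Schmidt — the diagonal singularity of the free resolvent kernel together with the $|x|^{-1/2}$ weight is borderline — so one cannot simply quote a Hilbert--Schmidt bound and must instead use the room in the Schatten scale afforded by an exponent $p$ strictly between $2$ and $4$. Once Theorem~\ref{Theorem self-adjoint extension} and its resolvent formula are in hand, the remaining ingredients (the relative compactness of the bounded remainder, the computation of $\sigma(D_0)$, and the application of Weyl's theorem) are routine.
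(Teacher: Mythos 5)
Your proof is correct and follows the same overall strategy as the paper: Weyl's essential spectrum theorem applied to the compact resolvent difference, with the splitting $D_{\rm ex}=\widetilde D+\gamma(V-\widetilde V)$, the factored resolvent formula for $\widetilde D$, and the Kato--Seiler--Simon inequality with $p\in(2,4)$ to place $\widetilde V^{1/2}R_0(\I\eta)$ in $\mathfrak S_p$ (your remark that one must work in $\mathfrak S_p$ for $p>2$ rather than Hilbert--Schmidt is exactly the point the paper exploits). The one place you diverge is the treatment of the bounded remainder: the paper keeps everything at the level of resolvent differences, writing $R(\I\eta)-R_0(\I\eta)=-\gamma R(\I\eta)(V-\widetilde V)\widetilde R(\I\eta)+(\widetilde R(\I\eta)-R_0(\I\eta))$ and bounding the $\mathfrak S_p$-norm of the first term for $p>4$ via a factorization through $(I-\Delta)^{\pm 1/4}$ and a second application of Kato--Seiler--Simon, whereas you show instead that $\gamma(V-\widetilde V)$ is $\widetilde D$-compact using the standard $f(x)g(-\I\nabla)$ compactness criterion and invoke the relatively-compact-perturbation form of Weyl's theorem. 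Your route is slightly more elementary and avoids having to justify the boundedness of $(I-\Delta)^{1/4}\widetilde R(\I\eta)(I-\Delta)^{1/4}$; the paper's route has the minor advantage of yielding quantitative Schatten-class information on the full resolvent difference rather than mere compactness.
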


\begin{proof}
  We show that the resolvent difference of $D_{\rm ex}$ and $D_{0}$ is
  compact. The claim then follows from Weyl's essential spectrum
  theorem \cite[Thm. XIII.14]{ReedSimon1978}. As in the proof of
  Theorem \ref{Theorem self-adjoint extension} let $\widetilde{D}$ be
  the self-adjoint operator corresponding to the singular part of $V$,
  and denote its resolvent by $\widetilde{R}(\I\eta)$.  By the
  Kato-Seiler-Simon inequality \cite[Thm. 4.1]{Simon1979T},
\begin{align}\label{Kato Seiler Simon}
\||\widetilde{V_i}|^{1/2}R_0(\I\eta)\|_{\mathfrak{S}_p}\leq C \||\widetilde{V_i}|^{1/2}\|_{p}\|(|\cdot|^2+m^2)^{-1/2}\|_{p},
\end{align}
and the right hand side is finite for all $p\in(2,4)$.
By \eqref{support properties chi} and the triangle inequality, \eqref{Kato Seiler Simon} continues to hold (with $2C$) if $\widetilde{V_i}$ is replaced by $\widetilde{V}$. 
The analogue of the resolvent formula \eqref{resolvent formula} for $\widetilde{D}$ and the trace ideal property of $\mathfrak{S}_p$ then imply that $\widetilde{R}(\I\eta)-R_0(\I\eta)\in\mathfrak{S}_p$ for all $p>1$,
in particular it is compact.
Denoting by $R(\I\eta)$ the resolvent of $D_{\rm ex}=\widetilde{D}+\gamma (V-\widetilde{V})$, we have 
\begin{align*}
R(\I\eta)-R_0(\I\eta)&=-\gamma R(\I\eta)(V-\widetilde{V})\widetilde{R}(\I\eta)+(\widetilde{R}(\I\eta)-R_0(\I\eta)).
\end{align*}
It remains to be shown that first summand is compact. Indeed, its $\mathfrak{S}_p$-norm is bounded by
\begin{align*}
\gamma \|R(\I\eta)\| \|(V-\widetilde{V})(I-\Delta)^{-1/4}\|_{\mathfrak{S}_p}\|(I-\Delta)^{1/4}\widetilde{R}(\I\eta)(I-\Delta)^{1/4}\|\|(I-\Delta)^{-1/4}\|,
\end{align*}
which is finite for $p>4$ by \cite[Thm. 4.1]{Simon1979T}.
\end{proof}

\section{Existence of infinitely many eigenvalues}

\begin{theorem}\label{thm. infinitely many eigenvalues dipole}
Any self-adjoint extension of $D$ (defined in \eqref{Dirac operator (symmetric)}) has infinitely many eigenvalues in $(-m,m)$.
\end{theorem}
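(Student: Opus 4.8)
The plan is to show, for each $N\in\N$, that the operator has at least $N$ eigenvalues strictly below the upper gap edge $m$, and then let $N\to\infty$; by the symmetry $x_0\mapsto-x_0$ it is enough to work near the single edge $m$. Every self-adjoint extension of $D$ has essential spectrum $(-\infty,-m]\cup[m,\infty)$, because $D$ has finite deficiency indices (the two Coulomb singularities with $\gamma<1/2$ together with the two punctures) and the extensions are thus finite-rank resolvent perturbations of $D_{\rm ex}$; in particular it would suffice to treat $D_{\rm ex}$, but the argument below uses only trial functions in $\dom_0=C_0^\infty(\R^2\setminus\{-x_0,x_0\},\C^2)$, which lies in the domain of every extension and on which $(\psi,D\psi)$ and $\|D\psi\|$ are extension-independent, so it applies verbatim to all of them. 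To have a genuine min--max principle available although $D$ is not semibounded, I would fix $E_0$ in the gap which is not an eigenvalue, pass to the bounded self-adjoint $T=(D-E_0)^{-1}$, and use that the eigenvalues of $D$ in $(E_0,m)$ are exactly the eigenvalues of $T$ above $\sup\sigma_{\rm ess}(T)=(m-E_0)^{-1}$. By the max--min characterization for the bounded operator $T$ it then suffices to produce an $N$-dimensional $S\subset\dom_0$ on which $(m-E_0)\,(\psi,(D-E_0)\psi)>\|(D-E_0)\psi\|^2$; writing $D-E_0=(D-m)+(m-E_0)$ this is equivalent to
\[
q[\psi]:=(\psi,(D-m)\psi)<-(m-E_0)^{-1}\|(D-m)\psi\|^2\qquad(0\neq\psi\in S).
\]

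To build such $\psi$ I would exploit the non-relativistic structure of $D$ near $m$. With $\psi=(\varphi,\chi)$ the upper/lower spinor components and $D_0=\left(\begin{smallmatrix}m & A\\ A^*& -m\end{smallmatrix}\right)$, $A^*A=AA^*=-\Delta$, the choice $\chi=(2m-\gamma V)^{-1}A^*\varphi$ — well defined and smooth once $\varphi$ is supported far from $\pm x_0$, where $\gamma V$ is small — annihilates the lower component of $(D-m)\psi$ and gives $q[\psi]=\gamma(\varphi,V\varphi)+(A^*\varphi,(2m-\gamma V)^{-1}A^*\varphi)$, which on such $\varphi$ agrees with the Schr\"odinger form $\tfrac1{2m}\|\nabla\varphi\|^2+\gamma(\varphi,V\varphi)$ up to an error controlled by $\sup_{\supp\varphi}|\gamma V|$. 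If $\varphi$ is an approximate eigenstate of $-\tfrac1{2m}\Delta+\gamma V$ with binding energy $\delta>0$, then $q[\psi]\approx-\delta\|\varphi\|^2$ and $\|(D-m)\psi\|\approx\delta\|\varphi\|$, so the inequality above holds once $m-E_0>\delta$; only finitely many $\delta$'s enter, $E_0$ is fixed accordingly, and near-orthogonality of the states handles linear combinations. Everything is thereby reduced to: the two-dimensional Schr\"odinger operator $-\tfrac1{2m}\Delta+\gamma V$ has infinitely many negative eigenvalues, realized by states living at large distances.

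For this I would use the dipole asymptotics $V(x)=2\,x_0\!\cdot\!x/|x|^3+O(|x|^{-3})=2|x_0|\cos\theta/|x|^2+O(|x|^{-3})$ as $|x|\to\infty$ ($\theta$ from the direction of $x_0$) and separated trial functions $\varphi(r,\theta)=R(r)\Theta_0(\theta)$, with $\Theta_0$ a ground state of the periodic operator $-\tfrac{d^2}{d\theta^2}+4m\gamma|x_0|\cos\theta$: its Rayleigh quotient vanishes at $\Theta\equiv1$ and is negative at $1+\eps\cos\theta$ with $\eps<0$ small, so its ground-state eigenvalue $\lambda_0$ is strictly negative for $\gamma\neq0$. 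The $\cos\theta$-terms then cancel in the induced radial operator, which carries the attractive inverse-square tail $\lambda_0/(2m r^2)$; after the unitary change $R=r^{-1/2}S$ it becomes $-\tfrac1{2m}\tfrac{d^2}{dr^2}-a/r^2+O(r^{-3})$ on $L^2((0,\infty),\rd r)$ with $2m\,a=\tfrac14-\lambda_0>\tfrac14$ — the supercritical inverse-square well, whose eigenvalues accumulate at $0$ from below. Restricting to $r\ge\rho$ with a Dirichlet condition removes only finitely many of them and makes the corresponding radial states supported in $r\ge\rho>2|x_0|$; for given $N$ one takes $N$ of these with $\rho=\rho_N$ so large that the errors from $\gamma V$, from the $O(r^{-3})$ remainder, and from replacing $(2m-\gamma V)^{-1}$ by $\tfrac1{2m}$ are all dominated by the binding energies, which yields the required $N$-dimensional $S\subset\dom_0$.

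I expect the variational step, not the Schr\"odinger analysis, to be the crux. Since $D$ is not semibounded \emph{and} the Coulomb singularities make the naive upper/lower-component min--max fail (the lower-component Rayleigh quotient is unbounded above near $\pm x_0$), one must either route the argument through the bounded resolvent $T=(D-E_0)^{-1}$ as above — which forces an explicit control of $\|(D-E_0)\psi\|$ on the trial functions — or first localize the singular part of $V$ away and recover the extension by a perturbation/convergence comparison. In either case the conclusion must hold \emph{uniformly in the extension}, which is exactly what trial functions in $\dom_0$ provide; the remaining delicate point is the bookkeeping of the three error terms so that the strict inequality $q[\psi]<-(m-E_0)^{-1}\|(D-m)\psi\|^2$ is preserved.
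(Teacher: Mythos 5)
Your strategy is viable and rests on the same mechanism as the paper's proof --- the dipole tail $2|x_0|\cos\theta/r^2$ combined with the strict negativity of the lowest eigenvalue $\lambda_0$ of the angular operator $-\partial_\theta^2+2q\cos\theta$ yields a supercritical attractive $r^{-2}$ tail and hence infinitely many states bound at large distances --- but the variational scaffolding you erect around it is genuinely different from, and substantially heavier than, what the paper uses. The paper simply squares the operator: for any self-adjoint extension $H$, the form $q[\psi]=\|H\psi\|^2$ on $\dom(H)$ is the closed form of $H^2$, and $q_0[\psi]=\|D\psi\|^2$ on $\dom_0$ is a restriction of it, so Glazman's lemma reduces the whole theorem to exhibiting infinitely many pairwise orthogonal $\psi\in\dom_0$ with $\|D\psi\|^2<m^2\|\psi\|^2$. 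A single Cauchy--Schwarz step then decouples the spinor components into two scalar forms $s_\pm[\psi_\pm]=\|\nabla\psi_\pm\|^2+\gamma^2\|V\psi_\pm\|^2\pm\gamma m(V\psi_\pm,\psi_\pm)$, and the trial functions are explicit: the Mathieu ground state $Y_0(m\gamma;\theta)$ times a radial cutoff supported in $kR\le r\le k^4R$, whose negative term $\lambda_0(m\gamma)\|r^{-1}\chi_R\|^2\sim R^{-2}\lambda_0\ln k$ diverges logarithmically in $k$ and so beats the $O(R^{-2})$ gradient cost. Only the \emph{sign of a quadratic form} is ever needed. Your route through $T=(D-E_0)^{-1}$ and the ansatz $\chi=(2m-\gamma V)^{-1}A^*\varphi$ instead demands control of the \emph{norm} $\|(D-E_0)\psi\|$, i.e.\ approximate eigenfunctions of $-\tfrac{1}{2m}\Delta+\gamma V$ in the operator rather than the form sense, with cutoff and commutator errors beaten by binding energies $\delta_n$ that decay geometrically to zero. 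That is exactly the step you flag as the crux, and it is left open in your sketch; the squaring trick makes it entirely unnecessary.

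Two further points deserve attention. First, your resolvent min--max needs $\sigma_{\rm ess}(H)\cap(E_0,m)=\emptyset$ for an \emph{arbitrary} extension $H$; you assert this via finiteness of the deficiency indices of $D$, which is plausible (only finitely many angular-momentum channels at each Coulomb singularity are in the limit-circle case) but is proved neither by you nor by the paper, which establishes the essential spectrum only for $D_{\rm ex}$ --- so if you pursue your route this must be supplied, whereas the paper's form argument at least localizes the issue to the single question of whether $\dim\Ran E_{(-m,m)}(H)=\infty$ forces infinitely many eigenvalues. Second, two happy details on your side: your sign $V=+2\cos\theta/r^2+O(r^{-3})$ for $x_0=e_1$ is the correct one (the paper's $-2\cos\theta/r^2$ is a harmless slip, since $\theta\mapsto\theta+\pi$ leaves the Mathieu spectrum invariant), and your elementary variational proof that $\lambda_0<0$ using the test function $1+\eps\cos\theta$ is a self-contained substitute for the paper's citation of McLachlan.
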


\begin{proof}
  Let $H$ be a self-adjoint extension of $D$. Then $H^2$, defined by
  the spectral theorem, is the unique operator associated to the
  nonnegative symmetric form
\begin{align*}
q(\phi,\psi):=(H\phi,H\psi),\quad \psi\in\dom(q):=\dom(H)
\end{align*}
by the first representation theorem \cite[Thm. 2.1]{Kato1966}. Indeed, the form
$q$ is closed since $H$ is (self-adjoint and hence) closed. Let $T$ be
the self-adjoint operator associated to the form $q$ by the first
representation theorem. Since $(H\phi,H\psi)=(T\phi,\psi)$ for all $\phi\in\dom(T)$ and $\psi\in\dom(H)$, it follows that 
$T\subset H^2$. Since $T$ is self-adjoint, we have $T=H^2$.

Let $q_0$ be the nonnegative symmetric form
\begin{align*}
q_0(\phi,\psi):=(D\phi,D\psi),\quad \psi\in\dom(q):=\dom(D)=\dom_0.
\end{align*}
We use the Cauchy-Schwarz inequality to obtain
\begin{align*}
q_0[\psi]&=\|\nabla\psi\|^2+\gamma^2\|V\psi\|^2+2\gamma\re\left(-\I\sigma\cdot\nabla\psi,V\psi\right)+2m\gamma(\sigma_3V\psi,\psi)\\
&\leq 
2\|\nabla\psi\|^2+2\gamma^2\|V\psi\|^2+2m\gamma(\sigma_3V\psi,\psi)=:s_+[\psi_+] +s_-[\psi_-],
\end{align*}
with $\psi=(\psi_+,\psi_-)^T$ and
\begin{align*}
s_{\pm}[\psi_{\pm}]:=\|\nabla \psi_{\pm}\|^2+\gamma^2\|V\psi_{\pm}\|^2\pm \gamma m (V\psi_{\pm},\psi_{\pm}),\quad \dom(s_{\pm})=\dom_0.
\end{align*}
Clearly, $q_0\subset q$, which (by the variational principle) implies that
\begin{align*}
N(H\in (-m,m))&=N(H^2-m<0)=\sup_{M\subset\dom(q)}\set{\dim M}{q[\psi]<0,\,\psi\in M}\\
&\geq  \sup_{M\subset\dom(q_0)}\set{\dim M}{q_0[\psi]<0,\,\psi\in M}\\
&=\sup_{M\subset\dom(s_+)}\set{\dim M}{s_+[\psi]<0,\,\psi\in M}\\
&+ \sup_{M\subset\dom(s_-)}\set{\dim M}{s_-[\psi]<0,\,\psi\in M}. 
\end{align*}
It is thus sufficient to show that there exist infinitely many
orthonormal functions $\varphi_n\in\dom_0$ such that
$s_{-}[\varphi_n]<0$. Note that we could as well have chosen $s_+$
because of the symmetry $s_+[U\psi]=s_-[\psi]$, where
$U\psi(x):=\psi(x-2 x\cdot x_0/|x_0|)$ is a unitary transformation.

Without loss of generality, we may assume that $x_0=e_1$.
In polar coordinates (by Taylor's theorem) we then have
\begin{align*}
V(r,\theta)=-2\frac{\cos\theta}{r^2}+O(r^{-3}).
\end{align*}
For $k>1$ define the radially symmetric function 
\begin{align}\label{chi}
\chi(r):=\begin{cases}
0\quad &r\leq k,\\
\frac{r-k}{k^2-k}\quad &k \leq r\leq k^2,\\
1\quad &k^2 \leq r\leq k^3,\\
\frac{k^4-r}{k^4-k^3}\quad &k^3 \leq r\leq k^4,\\
0\quad &k^4\leq r.
\end{cases}
\end{align}
We set $\chi_R(r):=R^{-1}\chi(r/R)$. Moreover, let $Y_0(q;\cdot)$ be the normalized eigenfunction corresponding to the lowest eigenvalue $\lambda_0(q)$ of the Mathieu operator
\begin{align}\label{Mathieu operator}
M(q)=-\partial_{\theta}^2+2q\cos\theta
\end{align}
on $L^2(S^1)$. It is known that $\lambda_0(q)<0$ for any $q>0$, see Section 2.150, Formula~(7) in \cite{McLachlan1947}. Setting
\begin{align}\label{trial eigenfunction}
\psi_R(r,\theta):=\chi_R(r)Y_0(m\gamma;\theta),
\end{align}
we obtain
\begin{align*}
s_-[\psi_R]&=R^{-2}\|\partial_r\chi\|_{L^2(\R_+,r\rd r)}^2+R^{-2}\lambda_0(m\gamma)\|r^{-1}\chi\|_{L^2(\R_+,r\rd r)}^2
+O(k^{-1})\\
&\leq R^{-2}\left(\frac{k^2+k}{k^2-k}+\frac{k^4+k^3}{k^4-k^3}\right)+R^{-2}\lambda_0(m\gamma)\ln k+O(k^{-1}),
\end{align*}
and this is negative for sufficiently large $k$.
Hence, the functions $\varphi_{n}:=\psi_{2^n}/\|\psi_{2^n}\|$ with $2^n>k^3$, are orthonormal and satisfy $s_{\pm}[\varphi_n]<0$ for all such $n$. 
\end{proof}

\begin{remark}
The existence of infinitely many eigenvalues for arbitrarily small dipole moment $\gamma$ is a consequence of the fact that the Mathieu operator \eqref{Mathieu operator} always has a negative eigenvalue for any $q>0$. Moreover, as the dipole moment (and hence $q=m\gamma$) increases, additional negative eigenvalues may emerge. Each time such a threshold is crossed, another infinite sequence of trial functions (with $Y_0$ in \eqref{trial eigenfunction} replaced by any eigenfunction of the Mathieu operator corresponding to a negative eigenvalue) can be constructed. These infinite sequences, labeled by the negative eigenvalues of the Mathieu operator, were called ''towers`` in \cite{Martinoetal2014}.
\end{remark}

\section{Clustering of eigenvalues at the edges of the gap}

In the following theorem, we denote by $C_H$ the constant in \eqref{Herbst inequality} for $n=2$, $a=1$,
\begin{align*}
C_H:=\frac{4\pi^2}{\Gamma(1/4)^4}\approx 0.229.
\end{align*}

\begin{theorem}
  Let $\delta>0$ and $\gamma<C_H$. Then the eigenvalues $E_n$ of $D_{\rm ex}$ satisfy
  \begin{equation}
    \label{eigenvalue sums dipole}
    \sum_{n}(m-|E_n|)^{\delta}\leq \frac{L m^{1+\delta-\delta_0}\gamma^{1+\delta_0}|x_0|^{1-\delta_0}}{(1-\gamma/C_H)^{2+\delta_0}}\frac{1}{\delta_0(1-\delta_0)}
  \end{equation}
for any $\delta_0\in(0,1)$ such that $\delta_0\leq \delta$; here, $L$ is some universal constant.
\end{theorem}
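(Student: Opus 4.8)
The plan is to bound the number of eigenvalues of $D_{\rm ex}$ in $(-m+\lambda, m-\lambda)$ for each $\lambda>0$, and then integrate against $\delta\lambda^{\delta-1}\rd\lambda$ to recover the Riesz sum $\sum_n(m-|E_n|)^\delta$. By the symmetry between the upper and lower gap edges (the transformation $U$ introduced in the proof of Theorem~\ref{thm. infinitely many eigenvalues dipole}, or simply $\sigma_3$-conjugation combined with complex conjugation), it suffices to count eigenvalues in $(-m+\lambda, m)$ and double. The standard device here is the Birman--Schwinger principle: an eigenvalue $E\in(-m,m)$ of $D_0+\gamma V$ corresponds to a value $1$ in the spectrum of the Birman--Schwinger operator $K_E:=\gamma|V|^{1/2}(D_0-E)^{-1}V^{1/2}$, so that $N(D_{\rm ex}\in(-m+\lambda,m))$ is controlled by the number of eigenvalues $\geq 1$ of $K_E$ for $E$ near $m$, which in turn is at most $\|K_E\|_{\mathfrak{S}_p}^p$ for any $p$ for which the Schatten norm is finite.

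The first key step is to control $\||V|^{1/2}(D_0-E)^{-1}V^{1/2}\|$ uniformly for $E\in(-m,m)$. Writing $(D_0-E)^{-1}=(D_0+E)(D_0^2-E^2)^{-1}=(D_0+E)(-\Delta+m^2-E^2)^{-1}$, one is led to estimate $|V|^{1/2}(-\Delta+\mu^2)^{-1/2}$ and $|V|^{1/2}(D_0+E)(-\Delta+\mu^2)^{-1}|V|^{1/2}$ with $\mu^2=m^2-E^2$; the Herbst inequality \eqref{Herbst inequality} with $n=2$, $\alpha=1$ gives exactly $|V_i|^{1/2}(-\Delta)^{-1/4}$ (hence also with $m^2-E^2\geq0$ added) bounded with norm $\gamma/C_H<1$, and one splits $V$ into its two Coulomb pieces as in the proof of Theorem~\ref{Theorem self-adjoint extension}, using \eqref{support properties chi}-type reasoning to handle cross terms. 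This yields $\|K_E\|<1$ globally only in an averaged sense; what we actually need is that $\|K_E\|$ stays bounded (by roughly $(1-\gamma/C_H)^{-1}$) so that $(I-K_E)^{-1}$ exists away from the point spectrum, and then to count via a Schatten norm.

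The heart of the matter is the quantitative Schatten estimate. One interpolates: $K_E$ is bounded with norm $\lesssim(1-\gamma/C_H)^{-1}$ uniformly in $E$, and on the other hand $K_E$ lies in $\mathfrak{S}_p$ for $p\in(2,4)$ by the Kato--Seiler--Simon inequality \eqref{Kato Seiler Simon}, with a norm that, as $E\to m$ (so $\mu\to 0$), behaves like $\|(|\cdot|^2+\mu^2)^{-1/2}\|_p\sim \mu^{2/p-1}$ times $\|V\|$-type factors carrying the length scale $|x_0|$. Thus $N(K_E\geq1)\lesssim\|K_E-(\text{something})\|_{\mathfrak{S}_p}^p$; more precisely one writes $K_E=K_E^{(1)}+K_E^{(2)}$ where $K_E^{(1)}$ has small operator norm and $K_E^{(2)}$ is Hilbert--Schmidt-type with controlled norm, so that $N(K_E\geq1)\leq N(K_E^{(2)}\geq 1-\|K_E^{(1)}\|)\leq(1-\gamma/C_H)^{-p}\|K_E^{(2)}\|_{\mathfrak{S}_p}^p\lesssim(1-\gamma/C_H)^{-p}\gamma^p|x_0|^{2-p}\mu^{2-p}$ after tracking all scalings ($\gamma$ from the coupling, $|x_0|$ and $\mu=\sqrt{m^2-E^2}\asymp\sqrt{m\lambda}$ from dimensional analysis of the two-center Coulomb potential). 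Then
\begin{align*}
\sum_n(m-|E_n|)^\delta
=\delta\int_0^m \lambda^{\delta-1}N\big(D_{\rm ex}\in(-m+\lambda,m-\lambda)\big)\,\rd\lambda
\lesssim \int_0^m \lambda^{\delta-1}\,\frac{\gamma^p|x_0|^{2-p}(m\lambda)^{1-p/2}}{(1-\gamma/C_H)^{p}}\,\rd\lambda,
\end{align*}
and choosing $p=2+\delta_0$ with $\delta_0\in(0,1)$, $\delta_0\le\delta$, makes the $\lambda$-integral $\int_0^m\lambda^{\delta-\delta_0/2-\delta_0/2+\cdots}$ converge at both ends; collecting the powers of $m$, $\gamma$, $|x_0|$ reproduces the exponents in \eqref{eigenvalue sums dipole}, and the $1/(\delta_0(1-\delta_0))$ comes from the two endpoint integrations (near $\lambda=0$ one needs the exponent $\delta+1-p/2>0$, i.e.\ $\delta_0<2\delta$, and the Kato--Seiler--Simon norm blows up as $p\downarrow2$, contributing $1/\delta_0$; the factor $1/(1-\delta_0)$ arises from the upper endpoint or from the $\mathfrak S_p$ bound as $p\uparrow 3$). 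The main obstacle, and the step requiring genuine care rather than routine bookkeeping, is making the operator-norm-versus-Schatten-norm splitting of $K_E$ effective \emph{uniformly} down to $E=m$: the Herbst constant is sharp, so one cannot simply absorb the full $K_E$ into a Neumann series, and one must exhibit the decomposition $K_E=K_E^{(1)}+K_E^{(2)}$ with $\|K_E^{(1)}\|_{\mathcal B(\H)}\le\gamma/C_H$ exactly (using the full strength of \eqref{Herbst inequality}, including the $m^2-E^2$ shift) while $K_E^{(2)}$—the part supported where the cutoff $\chi$ or the two Coulomb centers interact, together with the regular remainder—carries all the $\mathfrak S_p$ content with the stated dependence on $|x_0|$ and $\mu$.
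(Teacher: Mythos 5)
Your overall strategy (layer-cake formula plus a direct Birman--Schwinger/Schatten-norm bound on the counting function) is genuinely different from the paper's, which instead invokes the gap Lieb--Thirring machinery of Frank and Simon \cite{FrankSimon2011} to reduce \eqref{eigenvalue sums dipole} to $\tr(\sqrt{p^2+m^2}-m-\gamma V_\pm)_-^{\delta}$, splits this into a nonrelativistic part controlled by the two-dimensional Lieb--Thirring inequality and a relativistic part that is shown to vanish identically by the Herbst inequality \eqref{Herbst inequality} (this is where $\gamma<C_H$ and the powers of $1-\gamma/C_H$ enter), and extracts the factor $1/(\delta_0(1-\delta_0))$ from $\int V_\pm^{1+\delta_0}$. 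As it stands, however, your argument has two genuine gaps. First, the ``standard device'' you invoke --- that $N(D_{\rm ex}\in(-m+\lambda,m-\lambda))$ is bounded by the number of eigenvalues $\geq 1$ of $K_E$ and hence by a Schatten norm --- is not standard for eigenvalues in a \emph{gap} of a sign-indefinite perturbation: the usual Birman--Schwinger counting rests on monotonicity of the Birman--Schwinger eigenvalues in $E$, which is available below the essential spectrum for sign-definite potentials but not here. Supplying exactly this step in the gap setting is the content of Theorem 1.4 of \cite{FrankSimon2011}, which the paper uses as a black box; you would need to prove an analogous statement rather than assert it.

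Second, the central quantitative estimate is neither constructed nor internally consistent. You correctly flag the decomposition $K_E=K_E^{(1)}+K_E^{(2)}$ with $\|K_E^{(1)}\|\leq\gamma/C_H$ and $K_E^{(2)}\in\mathfrak{S}_p$ uniformly as $E\to m$ as ``the step requiring genuine care,'' but you do not exhibit it, and the scaling you assert for it, namely $\|K_E^{(2)}\|_{\mathfrak{S}_p}^p\leq C\,\gamma^p|x_0|^{2-p}\mu^{2-p}$, does not deliver the theorem. Inserting it with $p=2+\delta_0$ and $\mu^2\asymp m\lambda$ into your own layer-cake integral gives
\[
\gamma^{2+\delta_0}|x_0|^{-\delta_0}m^{-\delta_0/2}\int_0^m\lambda^{\delta-1-\delta_0/2}\rd\lambda
\;\asymp\;\frac{\gamma^{2+\delta_0}\,|x_0|^{-\delta_0}\,m^{\delta-\delta_0}}{\delta-\delta_0/2},
\]
which differs from the right-hand side of \eqref{eigenvalue sums dipole} by the factor $\gamma/(m|x_0|)$; this is dimensionless but not controlled by a universal constant in either direction, so the claim that ``collecting the powers reproduces the exponents'' is false. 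The mismatch signals that the low-momentum part of $(D_0-E)^{-1}=(D_0+E)(-\Delta+\mu^2)^{-1}$ contributes a factor of order $m$, not of order $\mu$ or $|k|$, so the correct $\mu$-, $m$-, $\gamma$- and $|x_0|$-dependence of the Schatten norm differs from what you wrote and the bookkeeping must be redone; the denominator $\delta-\delta_0/2$ also does not reproduce $1/(\delta_0(1-\delta_0))$ and degenerates outside the regime $\delta_0<2\delta$ rather than $\delta_0\leq\delta$. Until both points are repaired the proposal does not prove the stated bound.
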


\begin{proof}
We follow the lines of the proof of Frank and Simon for the one-dimensional Dirac operator \cite[Thm. 7.1]{FrankSimon2011}. The main tool in their proof, Theorem 1.4 in \cite{FrankSimon2011}, is stated for relatively compact perturbations, but still applies if the resolvent difference of the perturbed and unperturbed operator is compact; this is the case here, by Proposition \ref{prop. essential spectrum}. Proceeding as in \cite[Thm. 7.2]{FrankSimon2011}, one can then show that
\begin{align*}
\sum_{n}(m-|E_n|)^{\delta}\leq 2\left[\tr(H_0-\gamma V_-)_-^{\delta}+\tr(H_0-\gamma V_+)_-^{\delta}\right],
\end{align*}
where $H_0:=\sqrt{|p|^2+m^2}-m$
and $V_{\pm}$ are the positive and negative parts of $V$, respectively.
By decomposing $H_0$ into a part with small momentum and a part with large momentum, one can estimate
\begin{align}\label{sum of nonrelativistic and relativistic}
\tr(H_0-\gamma V_{\pm})_-^{\delta}\leq \tr\left(\frac{c_1|p|^2}{m}-\theta^{-1}\gamma V_{\pm}\right)_-^{\delta}+\tr\left(c_2|p|-(1-\theta)^{-1}\gamma V_{\pm}\right)_-^{\delta}
\end{align}
where
$c_1=(\sqrt{\rho^2+1}-1)\rho^{-2}$, $c_2=(\sqrt{\rho^2+1}-1)\rho^{-1}$,
and where $\rho>0$ and $0<\theta<1$ are arbitrary parameters, see
\cite[(7.9)--(7.12)]{FrankSimon2011}. Since $V_{\pm}$ decay like
$|x|^{-2}$ at infinity,
\begin{align}\label{first part of eigenvalue sum finite}
\tr\left(\frac{c_1|p|^2}{m}-\theta^{-1}\gamma V_{\pm}\right)_-^{\delta}\leq c_1^{-1}\theta^{-1-\delta}m L^{\rm LT}_{\delta,2}\int_{\R^2}(\gamma V_{\pm})^{1+\delta}\rd x<\infty
\end{align}
for all $\delta\in (0,1)$, where $L^{\rm LT}_{\delta,2}$ is the best constant in the Lieb-Thirring inequality. The case $\delta\geq 1$ is prohibited by
the singularities of $V_{\pm}$ at $\pm x_0$; however, the left hand
side of \eqref{eigenvalue sums dipole} is clearly finite for all
$\delta\geq \delta_0$ if it is finite for $\delta_0$ since
\begin{align}\label{sum delta geq 1}
  \sum_{n}(m-|E_n|)^{\delta}
  \leq m^{\delta-\delta_0}\sum_{n}\left(m-|E_n|\right)^{\delta_0}.
\end{align}

We now show that the second term in \eqref{sum of nonrelativistic and relativistic} is in fact zero.
We may assume that~$x_0=|x_0|e_1$. Then
\begin{align*}
V_+(x)&=V(x)\chi\{x_1\geq 0\}\leq |x-x_0|^{-1},\\
V_-(x)&=-V(x)\chi\{x_1\leq 0\}\leq |x+x_0|^{-1}.
\end{align*}
Hence, by Hardy's generalized inequality \eqref{Herbst inequality},
\begin{align*}
c_2|p|-(1-\theta)^{-1}\gamma V_{\pm}\geq c_2|p|-(1-\theta)^{-1}\gamma|x\mp x_0|^{-1}>0,
\end{align*}
provided $\gamma\leq c_2(1-\theta)C_H$. We will choose $\theta$
such that equality holds. Moreover, we pick $\rho$ such that $c_2=(1+\gamma/C_H)/2$
and evaluate the bound \eqref{first part of eigenvalue sum finite}. For $\delta\in (0,1)$, we estimate the integral in \eqref{first part of eigenvalue sum finite} in the regions $|x|\leq 2|x_0|$ and $|x|\geq 2|x_0|$, for $\delta\geq 1$, we use \eqref{sum delta geq 1}.
\end{proof}

\section{General charge distributions}

Let $\mu$ be a signed Borel measure 
on $\R^3$.
The corresponding potential is
\begin{align}\label{potential general charge distribution}
V(x)=\frac{1}{4\pi}\int_{\R^3}\frac{\rd\mu(y)}{|x-y|}.
\end{align} 
The physically relevant potential is the restriction of $V$ to
the hyperplane $x_3=0$.

If we assume that $\mu$ has compact support, $\supp(\mu)\subset B(0,R)$, then 
the multipole expansion of $V$ is given by
\begin{align}\label{Multipole expansion}
V(x)=\sum_{l=0}^{\infty}\sum_{m=-l}^l\frac{1}{2l+1}q_{lm}\frac{Y_{lm}(x/|x|)}{|x|^{l+1}},\quad |x|\geq 2R,
\end{align}
with the multipole moments
\begin{align*}
q_{lm}=\int_{\R^3}Y_{lm}(y/|y|)|y|^l\rd \mu(y).
\end{align*}
Note that \eqref{Multipole expansion} converges absolutely and uniformly.
Denote
\begin{align*}
e&=q_{00}=\int_{\R^3}\rd \mu(y),\quad \mbox{(total charge)},\\
p_i&=q_{1i}=\int_{\R^3}y_i\rd \mu(y),\quad i=-1,0,1,\quad \mbox{(dipole moment)}
\end{align*}
and $p=(p_{-1},p_0,p_1)$.
In the next theorem, we show that the condition $e=p=0$ is
necessary and sufficient for the finiteness of the number of
eigenvalues (at least for absolutely continuous measures).
\begin{theorem}\label{thm. infinitely or finitely many general charge distr.}
  Let $\mu$ be absolutely continuous with respect to
  (three-dimensional) Lebesgue measure, with compactly supported
  density $\rho$. Then the number of eigenvalues of $D_0+V$ in
  $(-m,m)$ is finite if and only if $e=p=0$.
\end{theorem}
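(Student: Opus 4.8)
The plan is to prove the two implications separately. Since $\rho\in L^\infty(\R^3)$ is compactly supported, the restriction to the plane $x_3=0$ of the potential \eqref{potential general charge distribution} is bounded with bounded gradient (differentiate under the integral sign), so by the Kato--Rellich theorem $D_0+V$ is self-adjoint on $\dom(D_0)=H^1(\R^2,\C^2)$; in particular ``$D_0+V$'' is unambiguous, and the proof of Proposition~\ref{prop. essential spectrum} applies to this (relatively compact) $V$, so $\sigma_{\rm ess}(D_0+V)=(-\infty,-m]\cup[m,\infty)$ and the spectrum in the gap can accumulate only at $\pm m$.

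\emph{Sufficiency of $e=p=0$.} Restricting the multipole expansion \eqref{Multipole expansion} to the plane gives, for large $|x|$,
\begin{align*}
V(x)=\frac{1}{4\pi}\Big(\frac{e}{|x|}+\frac{\hat x\cdot p}{|x|^2}\Big)+O(|x|^{-3}),\qquad \hat x:=x/|x|,
\end{align*}
so $e=p=0$ forces $|V(x)|+|\nabla V(x)|=O(|x|^{-3})$. Now use the square: since $D_0^2=-\Delta+m^2$ and $V$ is a bounded real scalar, $(D_0+V)^2-m^2=-\Delta+\{D_0,V\}+V^2$, and a short computation --- using $\langle\psi,\{D_0,V\}\psi\rangle=2\,\re\langle D_0\psi,V\psi\rangle$ and $\|\sigma\cdot\nabla\psi\|=\|\nabla\psi\|$, and absorbing the cross term into $-\Delta$ (not into all of $D_0^2$, which is the point) and into $V^2$ --- gives, for any $\eps\in(0,1)$,
\begin{align*}
(D_0+V)^2-m^2\geq (1-\eps)(-\Delta)-W_0,\qquad W_0:=(\eps^{-1}-1)V^2+2m|V|,
\end{align*}
where $W_0\geq0$ is bounded and $=O(|x|^{-3})$, so in particular $\int_{\R^2}W_0(x)\ln(2+|x|)\rd x<\infty$. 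Since $\sigma_{\rm ess}((D_0+V)^2-m^2)=[0,\infty)$, the variational principle gives
\begin{align*}
N(D_0+V\in(-m,m))=N((D_0+V)^2-m^2<0)\leq N((1-\eps)(-\Delta)-W_0<0),
\end{align*}
and the last quantity is finite by a two-dimensional Bargmann-type bound (finiteness of the negative spectrum of $-\Delta-W_0$ holds under the condition on $\int W_0\ln(2+|x|)$ just verified).

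\emph{Necessity of $e=p=0$.} If $e\ne0$ or $p\ne0$ we produce infinitely many eigenvalues following the proof of Theorem~\ref{thm. infinitely many eigenvalues dipole}: from the splitting $q_0[\psi]\le s_+[\psi_+]+s_-[\psi_-]$ (now with coupling constant $1$), with $s_\pm[\varphi]=\|\nabla\varphi\|^2+\|V\varphi\|^2\pm m(V\varphi,\varphi)$, it is enough to find infinitely many orthonormal scalar $\varphi_n\in C_0^\infty(\R^2)$ with $s_+[\varphi_n]<0$ or $s_-[\varphi_n]<0$. If $e\ne0$, then on the plane $V(x)=\tfrac{e}{4\pi|x|}+O(|x|^{-2})$, so choosing the edge according to $\mathrm{sgn}(e)$ makes the relevant form acquire an attractive tail $-c|x|^{-1}$ ($c>0$) dominating, at large $|x|$, both the $O(|x|^{-2})$ correction and the nonnegative $\|V\varphi\|^2$; radial trial functions built from $\chi$ as in \eqref{chi}, supported on disjoint dyadic annuli tending to infinity, then make that form negative by the same annulus argument as in Theorem~\ref{thm. infinitely many eigenvalues dipole}. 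If instead $e=0$ but $p\ne0$, then on the plane $V(r,\theta)=c_0\,r^{-2}\cos(\theta-\theta_0)+O(r^{-3})$ with $c_0\ne0$ (only the components of $p$ tangent to the plane survive the restriction, and these are the relevant ones) --- the dipole profile of Section~3 up to a rotation --- and with $\varphi_n(r,\theta)=\chi_{R_n}(r)Y(\theta-\theta_0)$, $Y$ the ground state of the Mathieu operator \eqref{Mathieu operator} with nonzero coupling (whose lowest eigenvalue $\lambda_0$ is negative, cf.\ the discussion after \eqref{Mathieu operator}), the computation following \eqref{trial eigenfunction} yields
\begin{align*}
s_\mp[\varphi_n]=R_n^{-2}\Big(\|\partial_r\chi\|_{L^2(\R_+,r\rd r)}^2+\lambda_0\|r^{-1}\chi\|_{L^2(\R_+,r\rd r)}^2\Big)+O(R_n^{-3}),
\end{align*}
which is negative once $k$ and then $R_n$ are large; disjoint dyadic annuli again give infinitely many orthonormal trial functions.

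\emph{Where the work is.} The substance is the sufficiency direction, and inside it the two-dimensional threshold: one must absorb the first-order cross term $\{D_0,V\}$ into $-\Delta$ alone, so as not to spoil the constant $m^2$ at the band edge, and then invoke a two-dimensional Schr\"odinger bound robust enough to cope with the logarithmic zero-energy resonance that is always present in dimension two --- and it is precisely here that the decay $O(|x|^{-3})$, i.e.\ $e=p=0$, is used. The necessity direction is a routine adaptation of Theorem~\ref{thm. infinitely many eigenvalues dipole}; the only point to check is that the lower-order ($O(r^{-3})$, resp.\ $O(|x|^{-2})$) corrections to $V$ are genuinely subleading on the dyadic annuli carrying the trial functions.
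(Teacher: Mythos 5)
Your proposal is correct and follows essentially the same route as the paper: for $e=p=0$ you bound $\|(D_0+V)\psi\|^2-m^2\|\psi\|^2$ from below by a two-dimensional Schr\"odinger form with a bounded $O(|x|^{-3})$ potential and invoke a two-dimensional Bargmann-type bound (the paper cites Newton's), and for $e\neq 0$ or $p\neq 0$ you adapt the trial functions of Theorem~\ref{thm. infinitely many eigenvalues dipole} exactly as the paper indicates (radial $\chi_R$ for the monopole, the Mathieu ground state on dyadic annuli for the dipole). The only point worth recording is the one your parenthetical glosses over --- a dipole moment perpendicular to the physical plane $x_3=0$ does not survive the restriction, so ``$p\neq 0$'' should really read ``the in-plane dipole moment is nonzero'' --- but the paper's own proof is equally silent on this.
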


\begin{remark}
Under the assumptions on the density $\rho$, the potential \eqref{potential general charge distribution}
is a bounded function, and hence $D_0+V$ is self-adjoint on $\dom(D_0)=H^1(\R^2,\C^2)$ by the Kato-Rellich theorem.
\end{remark}

\begin{proof}[Proof of theorem \ref{thm. infinitely or finitely many general charge distr.}]
  If $e\neq 0$ or $p\neq 0$, a straightforward adaptation of the proof
  of Theorem \ref{thm. infinitely many eigenvalues dipole}, using the
  multipole expansion \eqref{Multipole expansion}, shows that the
  there are infinitely many eigenvalues in $(-m,m)$. In the former
  case, we just replace the test functions $\psi_R$ by the radial
  functions $\chi_R$.

Let $e=p=0$, and let $l\geq 2$ be the least integer for which not all $q_{lm}$ are zero. Then \eqref{Multipole expansion} and the boundedness of $V$ imply that $|V(r\e^{\I\phi},0)|\leq  C_lq_lW_l(r)$, where $W_l(r):=(1+r)^{-l-1}$, $q_l=\max_{-l\leq m\leq l}|q_{lm}|$, and $C_l>0$ is a constant.
Hence,
\begin{align*}
\|(D_0+V)\psi\|^2&\geq \frac{1}{2}\|\nabla\psi\|^2-\|V\psi\|^2-m(|V|\psi,\psi)\\
&\geq \frac{1}{2}\|\nabla\psi\|^2-C_l^2q_l^2\|W_l\psi\|^2-m C_lq_l(W_l\psi,\psi)
\end{align*} 
and
\begin{equation}\label{number of eigenvalues for e and p zero}
\begin{split}
N(D_0+V\in (-m,m))&=N((D_0+V)^2-m<0)\\
&\leq N(-\Delta-C_l^2q_l^2 W_l^2-mC_lq_l W_l<0). 
\end{split}
\end{equation}
Since 
\begin{align*}
\int_0^{\infty}r(W_l(r)+W_l(r)^2)\rd r<\infty,
\end{align*}
the Bargmann-type bounds in \cite{Newton1983} imply that the rightmost quantity in \eqref{number of eigenvalues for e and p zero} is bounded by $1+C_l' (mq_l+q_l^2)$ for some constant $C_l'$. Note that an upper bound to the right hand side in inequality (2) in \cite{Newton1983} is easily obtained by replacing the logarithm by a small power.
\end{proof}

We have seen that the moments $\sum_j(m-|E_j|)^{\delta}$ for the pure dipole potential $V$ in \eqref{Dirac operator (symmetric)} are finite for all $\delta>0$, while for $e=p=0$ they are finite for all $\delta\geq 0$. Under rather general assumptions on the density (in particular, the monopole moment $e$ is not assumed to be zero), the following theorem asserts that the moments exist at least for $\delta>1$.

\begin{theorem}
Let $\delta>1$ and $\rho\in L^{\frac{3(2+\delta)}{2(3+\delta)}}(\R^3)\cap L^{\frac{3(2+\delta)}{2(3+\delta)}}(\R^3)$.
Then, the eigenvalues $E_n$ of $D_0+V$ satisfy
\begin{align*}
\sum_{n}(m-|E_n)^{\delta}\leq C_{\delta}\left(m\|\rho\|_{L^{\frac{3(1+\delta)}{2(2+\delta)}}(\R^3)}^{1+\delta}+\|\rho\|_{L^{\frac{3(2+\delta)}{2(3+\delta)}}(\R^3)}^{2+\delta}\right).
\end{align*}
\end{theorem}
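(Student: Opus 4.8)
The plan is to follow the same two-step strategy used in the preceding theorem on eigenvalue clustering, but now keeping track of the $L^p$ norms of the density $\rho$ rather than of $V$ itself. First I would invoke Proposition \ref{prop. essential spectrum} (or rather its easy analogue for the bounded potential $V$, which follows from the Kato--Rellich setting noted in the remark) so that the abstract trace-moment machinery of Frank--Simon applies, yielding
\begin{align*}
\sum_n (m-|E_n|)^{\delta}\leq 2\left[\tr(H_0-V_-)_-^{\delta}+\tr(H_0-V_+)_-^{\delta}\right],
\end{align*}
with $H_0=\sqrt{|p|^2+m^2}-m$ and $V_\pm$ the positive/negative parts of $V$. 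Then, exactly as in \eqref{sum of nonrelativistic and relativistic}, I would split $H_0$ into a small-momentum piece comparable to $c_1|p|^2/m$ and a large-momentum piece comparable to $c_2|p|$, reducing matters to a nonrelativistic Lieb--Thirring bound plus a massless (fractional Schr\"odinger) Lieb--Thirring bound, now \emph{with} the $|p|$ term, since $e\neq 0$ is allowed and Hardy's inequality is no longer available to kill it.

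The second step is to estimate the two resulting traces by the Lieb--Thirring inequalities for $-\Delta$ and for $|p|=\sqrt{-\Delta}$ in dimension two: the $|p|^2$ term contributes (as in \eqref{first part of eigenvalue sum finite}) a constant times $m\int_{\R^2}V_{\pm}^{1+\delta}\rd x$, and the $|p|$ term contributes a constant times $\int_{\R^2}V_{\pm}^{2+\delta}\rd x$ (the Lieb--Thirring bound for $\sqrt{-\Delta}-W$ in dimension $2$ scales with $\int W^{2+\delta}$; this is the point where $\delta>1$ is needed, so that the relevant kinetic power $1$ in dimension $2$ admits finite Lieb--Thirring constants for moment order $\delta$). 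So it remains to bound $\|V\|_{L^{1+\delta}(\R^2)}$ and $\|V\|_{L^{2+\delta}(\R^2)}$ in terms of norms of $\rho$.

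For this last and, I expect, most delicate step, I would use that $V$ restricted to the plane $x_3=0$ is (up to the harmless constant $1/4\pi$) the Riesz potential $I_2\rho$ of the three-dimensional density $\rho$, evaluated on a two-dimensional slice. The Hardy--Littlewood--Sobolev inequality gives $\|I_2\rho\|_{L^{q}(\R^3)}\lesssim\|\rho\|_{L^{s}(\R^3)}$ with $1/s=1/q+1/3$; one then needs a trace/restriction estimate to pass from the $L^q(\R^3)$ bound to an $L^p(\R^2)$ bound on the slice $x_3=0$. A clean way is to note that $I_2\rho$ and $\nabla I_2\rho=I_2(\nabla\rho)$ (or better, to avoid derivatives on $\rho$, to use fractional Sobolev embedding) put $I_2\rho$ into a fractional Sobolev space $W^{s,q}(\R^3)$ whose trace on a hyperplane embeds into $L^p(\R^2)$; chasing the exponents through $1/s=1/q+1/3$, the trace loss of ``one half a derivative'', and the target $p\in\{1+\delta,2+\delta\}$ produces precisely the exponents $\tfrac{3(1+\delta)}{2(2+\delta)}$ and $\tfrac{3(2+\delta)}{2(3+\delta)}$ appearing in the statement. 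The main obstacle is bookkeeping here: one must verify that these two HLS/trace-Sobolev exponents are both admissible (i.e.\ lie in the open ranges where the inequalities hold, which is where $\delta>1$ enters decisively) and that no compact-support hypothesis on $\rho$ is secretly required — the scaling of the exponents shows it is not. Assembling the two $L^p(\R^2)$ bounds into \eqref{first part of eigenvalue sum finite} and its massless analogue, and collecting constants into $C_\delta$, then yields the claimed inequality.
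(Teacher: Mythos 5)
Your proposal follows essentially the same route as the paper: the Frank--Simon machinery and the splitting \eqref{sum of nonrelativistic and relativistic}, then the non-relativistic and relativistic (massless) Lieb--Thirring inequalities giving $m\int V_\pm^{1+\delta}$ and $\int V_\pm^{2+\delta}$ over $\R^2$, and finally a bound on these planar Lebesgue norms of the Riesz potential $V$ in terms of three-dimensional norms of $\rho$. The only real difference is in the last step: the paper invokes the sharp trace inequality of Adams \cite[Thm.~2]{Adams1971} directly, which is exactly the statement $\|I_2\rho\|_{L^q(\R^2)}\lesssim\|\rho\|_{L^p(\R^3)}$ with $p=3q/(2(1+q))$ for the restriction to a hyperplane, whereas you propose to reassemble it from Hardy--Littlewood--Sobolev plus a fractional Sobolev trace theorem (note the HLS relation for $I_2$ in $\R^3$ is $1/s=1/q+2/3$, not $1/q+1/3$). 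Your exponent bookkeeping lands on the correct values $\tfrac{3(1+\delta)}{2(2+\delta)}$ and $\tfrac{3(2+\delta)}{2(3+\delta)}$. One correction: the hypothesis $\delta>1$ is \emph{not} needed for the Lieb--Thirring inequality for $\sqrt{-\Delta}$ in dimension two (that holds for all $\delta\geq 0$, since $d=2>2s=1$); it is needed precisely so that the source exponent $\tfrac{3(1+\delta)}{2(2+\delta)}$ exceeds $1$, without which the Adams/HLS trace estimate fails at the $L^1$ endpoint --- you do identify this admissibility issue as the decisive one, so the argument goes through, but the first attribution should be dropped.
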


\begin{proof}
The claim follows from \eqref{sum of nonrelativistic and relativistic} and the (relativistic and non-relativistic) Lieb-Thirring inequalities, upon estimating the corresponding Lebesgue norms of $V$ in terms of $\rho$ by means of
the sharp trace inequality in \cite[Thm.~2]{Adams1971}. Note also that in view of Sobolev embedding, $V$ is relatively bounded with respect to $D_0$, with relative bound zero; in particular, $D_0+V$ is self-adjoint.
\end{proof}

\textsc{Acknowledgment:} \textit{We thank Reinhold Egger for
  drawing our attention to the problem and for making \cite{Martinoetal2014}
  available to us before publication. Furthermore, we thank the DFG
  who partially supported this work through the SFB-TR 12.}

\def\cprime{$'$}

\end{document}